\documentclass[11pt]{article}

\usepackage[T1]{fontenc}
\usepackage[utf8]{inputenc}
\usepackage{lmodern}

\usepackage[margin=1in]{geometry}
\usepackage{graphicx}%
\usepackage{multirow}%
\usepackage{amsmath,amssymb,amsfonts}%
\usepackage{amsthm}%
\usepackage{mathrsfs}%
\usepackage[title]{appendix}%
\usepackage{xcolor}%
\usepackage{textcomp}%
\usepackage{manyfoot}%
\usepackage{booktabs}%
\usepackage{listings}%

\usepackage{fullpage,caption,wrapfig}
\usepackage{comment}
\usepackage{mathtools}
\usepackage[shortlabels]{enumitem}
\usepackage{complexity}
\usepackage[backend=bibtex, style=alphabetic, backref=true,maxbibnames=99, url=false]{biblatex} 
\usepackage{bbm}
\usepackage{thmtools}

\usepackage{nag,tikz}
\usetikzlibrary{calc}
\usetikzlibrary{decorations.pathreplacing}
\usetikzlibrary{shapes, patterns, decorations, fit, intersections, arrows, automata, positioning}

\usepackage{algorithm,algpseudocode}
\usepackage{multicol}
\usepackage[scaled]{helvet} 
\usepackage{thmtools} 
\usepackage{hyperref}
\hypersetup{
    colorlinks=true,
    linkcolor=violet,
    filecolor=magenta,      
    urlcolor=cyan,
    citecolor=blue,
    pdffitwindow=true,
}\usepackage[capitalize, nameinlink]{cleveref}

\theoremstyle{plain}
\newtheorem{theorem}{Theorem}[section]
\newtheorem{lemma}[theorem]{Lemma}

\newtheorem{fact}[theorem]{Fact}

\newtheorem{definition}[theorem]{Definition}

\theoremstyle{remark}

\newcommand{\old}[1]{}

\renewcommand{\R}{\ensuremath{\mathbb R}}

\renewcommand{\P}[1]{{\mathbb{P}}\left[#1\right]}
\renewcommand{\PP}[2]{{\mathbb{P}}_{#1}\left[#2\right]}

\renewcommand{\E}[1]{{\mathbb{E}}\left[#1\right]}
\renewcommand{\EE}[2]{{\mathbb{E}}_{#1}\left[#2\right]}

\renewcommand{\path}[2]{{ S_{#1}, \ldots, S_{#2} }}

\newcommand{\set}[1]{\{{#1}\}}

\def\b1{{\bf 1}}
\def\1{{\bf 1}}

\def\cT{{\cal T}}

\def\setminus{-}
\def\R{\mathbb{R}}

\newcommand{\norm}[1]{\|#1\|}

\newcommand{\bedit}[1]{{{#1}}}
\newcommand{\nedit}[1]{{{#1}}}
\newcommand{\dedit}[1]{{{#1}}}

\title{Maximum Entropy is a 10/7-Approximation Algorithm for the TSP on Half-Integral Cycle Cut Instances}
\author{Billy Jin\\Purdue University \and Nathan Klein\\Boston University \and David P.\ Williamson\\Cornell University}
\date{}

\begin{document}

\maketitle

\begin{abstract}
One of the most famous conjectures in combinatorial optimization is the four-thirds conjecture, which states that the integrality gap of the Subtour LP relaxation of the TSP is equal to $\frac43$. For 40 years, the best known upper bound was $1.5$, due to 
Wolsey \cite{Wolsey80}.  Recently, Karlin, Klein, and Oveis Gharan \cite{KKO21b} showed that the max entropy algorithm for the TSP gives an improved bound of $1.5 - 10^{-36}$.
In this paper, we show that the maximum entropy algorithm is a $\frac{10}{7}$-approximation for half-integral cycle cut instances of the TSP. This class of instances contains examples which demonstrate the subtour LP has an integrality gap of at least $\frac{4}{3}$, as well as examples showing that the performance of the max entropy algorithm is no better than $\frac{11}{8}$. 
\nedit{We note that in \cite{JinKW23}, the authors gave an algorithm upper bounding the integrality gap of this class of instances by $\frac{4}{3}$, so this work does not (and could not) provide an improved bound on the integrality gap. However, since} there is no reason to believe that the analysis of the maximum entropy algorithm on general instances is tight, our work provides hope (and potentially direction) for improved analysis on other instance classes.
\end{abstract} 

\section{Introduction}

In the traveling salesman problem (TSP), we are given a set of $n$ cities and the costs $c_{ij}$ of traveling from city $i$ to city $j$  for all $i,j$. The goal of the problem is to find the cheapest tour that visits each city exactly once and returns to its starting point.
An instance of the TSP is called {\em symmetric} if $c_{ij} = c_{ji}$ for all $i,j$; it is {\em asymmetric} otherwise.  Costs obey the {\em triangle inequality} (or are {\em metric}) if $c_{ij} \leq c_{ik} + c_{kj}$ for all $i,j,k$.  All instances we consider will be symmetric and obey the triangle inequality. We treat the problem input as a complete graph $G=(V,E)$, where $V$ is the set of cities, and $c_e = c_{ij}$ for edge $e=\{i,j\}$.

In the mid-1970s, Christofides \cite{Christofides76} and Serdyukov \cite{Serdyukov78} each gave a
$\frac{3}{2}$-approximation algorithm for the symmetric TSP with
triangle inequality.  The algorithm computes a minimum-cost spanning tree and then finds a minimum-cost perfect matching on the odd degree vertices of the tree to compute a connected Eulerian subgraph. Because the edge costs satisfy the triangle inequality, any Eulerian tour of this Eulerian subgraph can be ``shortcut'' to a tour of no greater cost.  Until very recently, this was the best approximation factor known for the symmetric TSP with triangle inequality, although over the last decade substantial progress was made for many special cases and variants of the problem. 

In recent years, a variation on the Christofides-Serdyukov algorithm has been considered. Its starting point is  a well-known linear programming relaxation of the TSP introduced by Dantzig, Fulkerson, and Johnson \cite{DFJ54}, sometimes called the {\em Subtour LP} or the {\em Held-Karp bound} \cite{HeldK71}.  The Subtour LP is as follows:
\begin{equation}
\begin{aligned}
    \min \quad & \sum_{e \in E} c_e x_e \\
    \text{s.t.} \quad & x(\delta(v)) = 2, &\forall \; v \in V, \\
    & x(\delta(S)) \geq 2, & \forall \; S\subset V, S \neq \emptyset, \\
    & 0 \leq x_e \leq 1, &\forall e \in E,
\end{aligned}
\label{LP}
\end{equation}
where $\delta(S)$ is the set of all edges with exactly one endpoint in $S$ and we use the shorthand that $x(F) = \sum_{e \in F} x_e$.  \dedit{Wolsey \cite{Wolsey80} shows that the minimum-cost spanning tree is at most the value of the Subtour LP, and a matching on its odd degree vertices is at most half the value of the Subtour LP, showing that the Christofides-Serdyukov algorithm has cost at most $\frac{3}{2}$ the Subtour LP.}  \dedit{Following Wolsey, } it is not difficult to show that for any  solution $x^*$ of this LP relaxation, $\frac{n-1}{n}x^*$ is a feasible point in the spanning tree polytope, i.e., the convex hull of all spanning trees of the graph.  Therefore, $\frac{n-1}{n}x^*$ can be decomposed into a convex combination of spanning trees, and the cost of this convex combination is a lower bound on the cost of an optimal tour. Any such convex combination 
can be viewed as a distribution over spanning trees such that the expected cost of a spanning tree sampled from this distribution is a lower bound on the cost of an optimal tour; note that there can be many possible such convex combinations, giving rise to many possible distributions.  The variation of the Christofides-Serdyukov algorithm considered is one that samples a random spanning tree from some distribution arising from a convex combination, 
and then finds a minimum-cost perfect matching on the odd vertices of the tree.  This idea was introduced in work of Asadpour, Goemans, M\k{a}dry, Oveis Gharan, and Saberi \cite{AsadpourGMOS17} (in the context of the asymmetric TSP) and Oveis Gharan, Saberi, and Singh \cite{OveisGharanSS11} (for symmetric TSP).

Asadpour et al.\ \cite{AsadpourGMOS17} and Oveis Gharan, Saberi, and Singh \cite{OveisGharanSS11} consider a particular distribution of spanning trees known as the {\em maximum entropy distribution}.  The maximum entropy algorithm finds a probability distribution $p_T$ on spanning trees $T$ such that the marginal distribution on each edge $e$ is $\frac{n-1}{n}x^*_e$ (that is, $\sum_{T:e\in T} p_T = \frac{n-1}{n} x^*_e$) and that maximizes the entropy function $-\sum_T p_T \log p_T$.    We will call the algorithm that samples from the maximum entropy distribution and then finds a minimum-cost perfect matching on the odd degree vertices of the tree the {\em maximum entropy algorithm} for the symmetric TSP.  

In a breakthrough result, Karlin, Klein, and Oveis Gharan \cite{KKO21} show that a variant of the maximum entropy algorithm has performance ratio better than 3/2, although the amount by which the bound was improved is quite small (approximately $10^{-36}$; the $\epsilon$ was then increased to $10^{-34}$ in \cite{GKL24}). The achievement of the paper is to show that choosing a random spanning tree from the maximum entropy distribution gives a distribution of odd degree nodes in the spanning tree such that the expected cost of the perfect matching is cheaper (if marginally so) than in the \dedit{Wolsey} analysis.  Note that the Karlin et al.\ algorithm actually samples from the set of {\em 1-trees}, a spanning tree plus one additional edge\footnote{Held and Karp \cite{HeldK71} define a 1-tree to be a tree with two distinct edges incident on a specific vertex plus a spanning tree on the remaining vertices. Our definition is more general.}, and then finds a matching on the odd-degree vertices. In this paper, we will study this algorithm based on 1-trees.

\subsection{Our Contribution}
We show that the maximum entropy algorithm studied in \cite{KKO21} is a randomized $\frac{10}{7}$-approximation algorithm for a class of TSP instances known as {\em half-integral cycle cut} instances, which are defined as follows.

\begin{definition} 
\nedit{A pair} $(G,x)$ is a \textbf{half-integral} TSP instance if \nedit{$G$ is an instance of TSP,} $x$ is a feasible solution to the Subtour LP, and $x$ satisfies $x_e \in \{0, \frac{1}{2}, 1\}$ for all $e \in E$. It is a {\textbf{cycle cut}} instance if for any set $S$ with $x(\delta(S)) = 2$ (we call such a set $S$ a \textbf{tight set}) and $|S| \geq 2$, there is a partition $A\cup B$ of $S$ such that both $A$ and $B$ are tight. 
\end{definition}
\nedit{Not all half-integral instances are cycle cuts. For example, $K_5$ with $x_e = \frac{1}{2}$ on every edge is not a cycle cut instance. We will show the following theorem:
\begin{restatable}{theorem}{mainthm}\label{thm:main}
Let $(G,x)$ be a half-integral cycle cut instance with $G=(V,E)$. Let $\mu$ be the max entropy distribution over 1-trees with marginals $x$, for any root $r \in V$. For $T \sim \mu$, let $M$ be the cheapest perfect matching on the odd degree vertices of $T$. Then, 
$$\EE{T \sim \mu}{c(T) + c(M)} \le \frac{10}{7}c(x).$$
In other words, the max entropy algorithm is a $\frac{10}{7}$ approximation for $(G,x)$.
\end{restatable}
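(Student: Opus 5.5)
Since $\E{c(T)} = c(x)$, the content of the theorem is the bound $\E{c(M)} \le \frac{3}{7}c(x)$. I will follow the standard template of \cite{KKO21}: build a random fractional point $y$ that depends on $T$ and lies in the $O$-join polytope, where $O$ is the set of odd vertices of $T$, and bound its expected cost. By the triangle inequality, the cheapest perfect matching on $O$ costs at most the cheapest $O$-join. So it suffices to find $y \ge 0$ with $y(\delta(S)) \ge 1$ for every $S$ with $|S \cap O|$ odd, and with $\E{c(y)} \le \frac{3}{7}c(x)$.

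The baseline is $y = x/2$. It is always feasible, because $x(\delta(S)) \ge 2$, and it gives exactly the $\frac32$ bound. The plan is to lower $y_e$ below $x_e/2$ on many edges and to certify feasibility only on the cuts where it could fail.

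\textbf{Step 1: structure.} I will use the cycle cut structure to describe the tight sets. In a half-integral cycle cut instance, the minimal tight sets can be arranged in a hierarchy: each tight set $S$ with $|S|\ge 2$ splits into tight children $A,B$. Contracting the children gives a degree-structure that looks like a cycle, with $x$-weight $\frac12$ or $1$ on the edges joining consecutive pieces. Only tight cuts, where $x(\delta(S)) = 2$, can have $y(\delta(S)) < 1$ after we reduce $y$. Any non-tight cut has $x(\delta(S)) \ge 3$ by half-integrality, so reducing each edge by a small constant fraction keeps it feasible. This reduces feasibility to tight cuts, which form a laminar, cycle-structured family.

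\textbf{Step 2: event analysis.} For each tight set $S$, I will analyse the event that $T$ has exactly one or two edges in $\delta(S)$ with $S$ even. On the event that $\delta(S)\cap T$ is even, the cut $\delta(S)$ is not odd, so its constraint is not needed. I will use the negative correlation and strongly Rayleigh properties of the max entropy distribution to compute these probabilities. Because the instance is half-integral, the max entropy distribution restricted to each tight set's cycle has essentially explicit behaviour: it is a product of uniform distributions on the pieces. This should show that each edge's cut is even with probability bounded below.

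Whenever a cut is even, I reduce $y_e$ on its edges by some $\eta$. Whenever reducing would break an odd cut, I increase $y$ on a neighbouring cheaper collection of edges.

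\textbf{Step 3: charging.} I will choose the constants by an LP-style charging argument so that every edge's expected $y_e$ is at most $\frac{3}{7}x_e$. I expect this to be the main obstacle. The increases needed to fix cuts that become odd after a reduction must be charged to edges in a way that covers the worst-case configurations. I would first try a local rule: compare the two tight sets sharing each edge, and set $y_e = x_e(\frac12 - \eta)$ or $x_e(\frac12 + \tau)$ depending on the parity events of those two sets. Then solve for $\eta,\tau$ to balance the resulting probabilities, and verify that $\frac{3}{7}$ comes out of the extremal case.
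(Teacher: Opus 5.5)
\paragraph{There is a genuine gap: Step 3 contains the whole proof, and it is left undone.}
Your reduction from the matching to a point of the $O$-join polyhedron is fine, and a suitable $y$ does exist: the conditional expectation, given $T$, of the edges the paper adds is one. But Step 3, which you flag as the main obstacle, has no construction and no computation, and nothing in Steps 1--2 singles out the constant $\frac37$.

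The required saving is large. The expected $O$-join cost must drop from $\frac12 x_e$ to $\frac37 x_e$ on every edge, a $\frac17$ relative reduction. Each edge lies in $\delta(S)$ for a whole chain of hierarchy sets, and also for non-hierarchy tight sets. So lowering $y_e$ because one of these cuts is even can break another that is odd. A rule that looks only at the parities of two tight sets per edge does not coordinate this, and you give no argument that the resulting constraints can be met at $\frac37$.

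Step 1 also contains a false claim: the tight sets of a cycle cut instance are not laminar. If $x$ is the indicator of a Hamiltonian cycle, every arc is tight, so $\{v_1,v_2\}$ and $\{v_2,v_3\}$ cross. A fractional $O$-join must therefore also be certified on odd tight sets that cross the hierarchy, and your plan does not do this.

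\paragraph{The missing idea: build an integral augmentation top-down with an exact parity invariant.}
The paper bypasses fractional $O$-joins entirely. It builds an \emph{integral} augmentation top-down along the binary hierarchy and maintains an exact invariant on the joint parities of the four half-edges of each hierarchy cut:
\begin{itemize}
\item all four are odd with probability $\frac{1}{14}$;
\item all four are even with probability $\frac{1}{14}$;
\item each of the six pairs is odd, with the other two even, with probability $\frac{1}{7}$.
\end{itemize}
At the root this is achieved by explicit random rules applied to the two sampled edges (add the two unsampled edges, double both, and so on).

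The invariant propagates from $S$ to its children $A,B$ because the max entropy choice between the cycle partners $a,b$ is uniform and independent of the parities in $\delta(S)$. If $\delta(S)$ has one odd edge on each side, do nothing. Otherwise, double the chosen partner or add the other one according to the parity pattern, using a fair coin when both odd edges lie on the same side.

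Every edge is then doubled with probability $\frac{3}{28}$, so $\mathbb{E}[Y_e]=\frac12+2\cdot\frac{3}{28}=\frac57=\frac{10}{7}x_e$. Since every singleton is a hierarchy cut, all degrees are even without examining any other cut. The added edges, taken mod $2$, form an $\mathrm{odd}(T)$-join of expected cost $\frac37 c(x)$, and by the triangle inequality this bounds the cost of the cheapest matching.
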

}
Note that this implies a $\frac{10}{7}$-approximation algorithm for instances for which we can find an optimal half-integral cycle cut solution $x$. 

One reason why half-integral instances are interesting is a conjecture by Schalekamp, Williamson, and van Zuylen \cite{SchalekampWvZ14}, which states that these are the worst-case instances for the integrality gap of the subtour LP. Currently, the best known approximation ratio for half-integral TSP is 1.49776, due to Klein and Taziki \cite{KT25}, which analyzes the max entropy algorithm, \nedit{and builds upon other recent work on half integral TSP \cite{Gupta2024,HaddadanN19,KKO20}.} Cycle cut instances are also interesting because recent papers that break the $\frac32$ bound for TSP decompose the graph into a hierarchy of cuts, where each cut in the hierarchy can be either a \emph{cycle cut} or a \emph{degree cut}. Our definition of cycle cut instances corresponds to the case where all cuts in the hierarchy are cycle cuts, and so it is plausible that studying this case could provide insights about the more general case. \nedit{The class of half-integral cycle cuts contains instances on which the integrality gap is at least $\frac{4}{3}$, including the well-known envelope graph.}

The authors have previously shown that there is a $\frac{4}{3}$-approximation algorithm \nedit{and an upper bound of $\frac{4}{3}$ on the integrality gap} for half-integral cycle cut instances  \cite{JinKW25}, and that there exists a half-integral cycle cut instance on which the maximum entropy algorithm produces a solution with expected cost at least $\frac{11}{8}-o(1)$ times the optimum \cite{JinKW25b}. The latter result shows that the maximum entropy algorithm is \emph{not} a conjectured $\frac{4}{3}$-approximation algorithm for TSP. Still, our result is the first to show that maximum entropy algorithm \nedit{has an approximation factor of better than 1.49} on a non-trivial class of instances. Moreover, the $\frac43$-approximation algorithm in \cite{JinKW25} is specialized to half-integral cycle cut instances, and it is not clear how to extend it to general TSP instances. Since there is no reason to believe that the analysis of the maximum entropy algorithm on general instances is tight, our work provides hope (and potentially direction) for improved analysis on other instance classes. \dedit{The analysis of \cite{JinKW25} uses the stationary point of a Markov chain for its argument.  This paper uses ideas from that analysis but without the Markov chain, instead reasoning directly about the stationary distribution.} \nedit{We find it promising that max entropy, while not able to obtain the $\frac{4}{3}$ ratio of \cite{JinKW25} (which is specialized to these instances), is nonetheless able to simulate that algorithm with a relatively small loss in approximation ratio.} It remains an interesting open question to close the gap in the performance guarantee of the maximum entropy algorithm between the lower bound of $\frac{11}{8}$ and the upper bound of $\frac{10}{7}$ on half-integral cycle cut instances.

\section{Preliminaries and the Algorithm}

We split every edge $e$ with $x_e = 1$ into two edges of value $\frac{1}{2}$ so that $G$ is a 4-regular multigraph with $x_e =\frac{1}{2}$ on every edge. Note that for any tight set $S$ (for which $x(\delta(S))=2$), there are four edges in $\delta(S)$. We construct a hierarchy of tight sets as follows.\footnote{\nedit{We remark that the hierarchy here is similar but slightly different than that considered in prior work, e.g., \cite{KKO20,JinKW23}. Our hierarchy is binary on cycle cuts, which leads to a simpler analysis, whereas prior hierarchies (even in the cycle cut case) would have sets with multiple children.}} We begin by selecting a root vertex $r$ arbitrarily. Note that both $\{r\}$ and $V\setminus \set{r}$ are tight cuts. Using the definition of a cycle cut instance, we can divide $V\setminus \set{r}$ into two tight sets, and then recursively divide each of these into two tight sets, yielding a binary tree ${\cal T}$ of tight sets. For a tight set $S$ in the tree partitioned into two tight sets $A$ and $B$, we observe that of the four edges in $\delta(S)$ (with $x_e= 1/2$), exactly two are also in $\delta(A)$ and the remaining two are also in $\delta(B)$; thus there are two edges in $\delta(A) \cap \delta(B)$. See Figures \ref{fig:root} and \ref{fig:notroot}. Note that each leaf of the tree corresponds to a vertex of the graph. 

\begin{figure}
    	\begin{center}
		\begin{tikzpicture}[point/.style={circle,draw=black,thick,inner sep=0,minimum size=1mm,fill=black}]
			
			\node (1) at (0,0) {};
			\node (2) at (2,0) {};
			\node (3) at (1,1) {};
			\node (4) at (0,2) {};
			\node (5) at (2,2) {};
			
			\node (6) at (4,0) {};
			\node (7) at (6,0) {};
			\node (8) at (5,1) {};
			\node (9) at (4,2) {};
			\node (10) at (6,2) {};
			
			\node[circle, draw=black, minimum size=3.5cm, thick, label=$A$] (A) at (3) {};
			\node[circle, draw=black, minimum size=3.5cm, thick, label=$B$] (B) at (8) {};
			
			\node[ellipse, draw, fit=(1) (5) (7) (10) (A) (B), inner sep=5pt, label=$V-\{r\}$, very thick] {};
			
			\node[point, label=$r$] (r) at (3,5) {$r$};
			
			\path (2) edge node[above] {$a$} (6)
			      (5) edge node[above] {$b$} (9);
			      
			\path (r) edge[bend right] node[above] {$e$} (4);
			\draw (r) to [out=150, in=150] node[above left] {$f$} (1);
			\path (r) edge[bend left] node[above] {$g$} (10);
			\draw (r) to [out=30, in=30] node[above right] {$h$} (7);
			
		\end{tikzpicture}
	\end{center}
 
    \caption{Edges and tight sets at the root node $r.$}   \label{fig:root}
\end{figure}

\begin{figure}
	\begin{center}
		\begin{tikzpicture}[point/.style={circle,draw=black,thick,inner sep=0,minimum size=1mm,fill=black}]
			
			\node (1) at (0,0) {};
			\node (2) at (2,0) {};
			\node (3) at (1,1) {};
			\node (4) at (0,2) {};
			\node (5) at (2,2) {};
			
			\node (6) at (4,0) {};
			\node (7) at (6,0) {};
			\node (8) at (5,1) {};
			\node (9) at (4,2) {};
			\node (10) at (6,2) {};
			
			\node[circle, draw=black, minimum size=3.5cm, thick, label=$A$] (A) at (3) {};
			\node[circle, draw=black, minimum size=3.5cm, label=$B$, thick] (B) at (8) {};
			
			\node[ellipse, draw, fit=(1) (5) (7) (10) (A) (B), inner sep=5pt, very thick, label=$S$] {};
			
			\node (e) at (-1,5) {};
			\node (f) at (-2,5) {};
			\node (g) at (7,5) {};
			\node (h) at (8,5) {};
			
			\path (2) edge node[above] {$a$} (6)
			(5) edge node[ above] {$b$} (9)
			(e) edge node[near start,above right] {$e$} (4)
			(f) edge node[near start,left] {$f$} (1)
			(g) edge node[near start,above left] {$g$} (10)
			(h) edge node[near start,right] {$h$} (7);

		\end{tikzpicture}
	\end{center}

\caption{Edges and tight sets at non-root node $S$.}\label{fig:notroot}
\end{figure}

\nedit{On cycle cut instances, max entropy coincides with a very natural sampling procedure: at each level, choose one edge between the two children independently, sampling each edge $e$ with probability $x_e$.} We now describe the maximum entropy algorithm on half-integral cycle cut instances more formally.\footnote{See Jin, Klein, and Williamson \cite{JinKW25b} for a full description of how the algorithm works on general TSP instances, as well as a discussion of variants of the maximum entropy algorithm and why 
the algorithm studied behaves identically to the variant studied in \cite{KKO21} on half integral cycle cut instances.}
If $V-\set{r}$ is partitioned into two tight sets $A$ and $B$ with edges $\set{e,f} = \delta(r) \cap \delta(A)$ and $\set{g,h} = \delta(r) \cap \delta(B)$, then the maximum entropy algorithm selects exactly one edge from $\{e,f\}$ and one from $\{g,h\}$ independently and uniformly at random.  For the two edges $\{a,b\} = \delta(A) \cap \delta(B)$ joining any pair of sets in the tree ${\cal T}$ that are children of a tight set $S$ in the tree, the maximum entropy algorithm selects exactly one of $\{a,b\}$ uniformly at random. \nedit{We emphasize that the choices made at each level of ${\cal T}$ are independent of one another.} We call the two edges $a,b$ such that $\{a,b\} = \delta(A) \cap \delta(B)$ \emph{cycle partners} for sibling sets $A, B$ in the tree ${\cal T}$.

\nedit{It may be instructive to notice why the resulting set of edges is connected. This has a quick inductive proof: at the bottom of the hierarchy, the vertices are connected components. Now, if we inductively assume the two children are connected components, then by adding an edge between them, the parent becomes a connected component as well.}

Let $\nu$ be the distribution of edges given by the maximum entropy algorithm.  In what follows, we show how to extend this distribution to a distribution over connected Eulerian graphs where the expected cost of the graph is at most $\frac{10}{7}$ times the LP value. In particular, we will find a distribution $\mu$ over spanning Eulerian multi-subgraphs which stochastically dominates $\nu$. In other words, to produce $\mu$, we will first sample from $\nu$ and then add edges. 
This will imply our result, since then $\mu$ can be written as a distribution which first samples a tree $T$ from $\nu$ and then adds a matching on the odd degree vertices of $T$, and the max entropy algorithm adds the cheapest such matching.


\section{Analysis}

As discussed, here we will construct a distribution $\mu: \{0,1,2\}^E \to \R_{\ge 0}$ over spanning Eulerian multi-subgraphs which arises from sampling a tree from $\nu$ and then adding edges. For each edge $e \in E$, we will let $Y_e \in \{0,1,2\}$ be the random variable counting the number of copies of $e$ sampled by $\mu$. Let $X_e \in \{0,1\}$ denote the parity of $Y_e$, so that $X_e=1$ if $Y_e=1$ and $X_e = 0$ otherwise.  
Given a tight set $S \subseteq V$ with $|\delta(S)|=4$, let $X(\delta(S))$ denote $\sum_{e \in \delta(S)} X_e$. That is, $X(\delta(S))$ is the number of edges in $\delta(S)$ that have odd parity.   
\begin{definition}
\label{def:stationary}
For a distribution $\mu$ on edges and a tight cut $S$, $\mu$ is \textbf{stationary} on $\delta(S)$ if
\begin{enumerate}[(i)]
	\item $\P{X_e=X_f=1, X(\delta(S))=2} = \frac{1}{7}$ for all $e,f \in \delta(S)$ with $e \not= f$.
	\item $\P{X(\delta(S))=4} = \frac{1}{14}$.
	\item $\P{X(\delta(S))=0} = \frac{1}{14}$. 
\end{enumerate}
Note that this saturates the probability function, so $X(\delta(S))\in \{0,2,4\}$ with probability 1. 
\end{definition}
In other words, $\mu$ is stationary on $\delta(S)$ if with probability $\frac{1}{14}$ each edge in $\delta(S)$ is used an odd number of times,\footnote{\nedit{As we never use edges three times, this is equivalent to saying every edge is present with one copy.}} with probability $\frac{1}{14}$ each edge in $\delta(S)$ is used an even number of times, and with probability $\frac{6}{7}$, exactly two edges from $\delta(S)$ are used an odd number of times, with this pair chosen uniformly at random among the 6 possible pairs. \nedit{We remark that we use the term stationary following \cite{JinKW25}, which found a stationary distribution on a Markov chain of states encoding the parities of each vertex. Our approach here is inspired by that work.}

Below we show that if $\mu$ has three desirable properties, it gives a $\frac{10}{7}$ approximation. We will then show we can construct a distribution $\mu$ with these properties.
\begin{lemma}\label{lem:dist}
If there is a distribution $\mu: \{0,1,2\}^E \to [0,1]$ such that:
\begin{enumerate}[(1)]
	\item 
    For every $S\in \cT$, $\mu$ is stationary on $\delta(S)$.
	\item For every $S \in \cT$, $\P{Y(\delta(S))\ge 2}=1$, and for every pair of cycle partners $a,b$ of siblings $A, B \in \cT$, $\PP{Y \sim \mu}{Y_a+Y_b \ge 1} = 1$.  
	\item For every edge $e$, $\P{Y_e=2} = \frac{3}{28}$,
\end{enumerate}
then $\mu$ is a distribution over connected Eulerian multi-subgraphs with $\EE{Y \sim \mu}{Y_e} = \frac{5}{7}$ for all $e \in E$. 
Sampling from such a distribution $\mu$ gives a randomized $\frac{10}{7}$-approximation algorithm since the expected number of times each edge is used is $\frac{10}{7}x_e$. \end{lemma}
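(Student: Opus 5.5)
We must show three things: each sample is Eulerian, each sample is connected, and $\E{Y_e}=\frac{5}{7}$. The approximation claim then follows by linearity of expectation.

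\emph{Eulerian.} Every vertex $v \neq r$ is a leaf of $\cT$, so $\{v\}\in\cT$ and $\delta(\{v\})$ is a tight cut with four edges. By property (1) and the remark after Definition~\ref{def:stationary}, $X(\delta(\{v\}))\in\{0,2,4\}$ almost surely, so $\deg_Y(v)\equiv X(\delta(\{v\}))\equiv 0 \pmod 2$. For the root, the degree of $r$ has the same parity as $\sum_{v\neq r}\deg_Y(v)$, because the total degree sum $2|Y|$ is even. Hence $r$ also has even degree. (Alternatively, $V-\set{r}\in\cT$ and $\delta(\{r\})=\delta(V-\set{r})$.)

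\emph{Connected.} Use the same induction as for the tree. A leaf $\{v\}$ is trivially connected. Suppose $S\in\cT$ has children $A,B$ that are each connected in $Y$. Property (2) gives $Y_a+Y_b\ge1$ for the cycle partners $a,b$, so some edge of $Y$ joins $A$ and $B$, and $S$ is connected. Therefore $V-\set{r}$ is connected. Finally, $Y(\delta(V-\set{r}))\ge2$ by property (2), so $r$ is joined to the rest. The statement's condition about cycle partners applies to the children of every non-leaf set, so it covers the top split of $V-\set{r}$ as well.

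\emph{Expectation.} Since $Y_e\in\{0,1,2\}$,
\[
\E{Y_e}=\P{X_e=1}+2\P{Y_e=2}=\P{X_e=1}+\tfrac{3}{14}.
\]
To compute $\P{X_e=1}$, choose a tight cut $S\in\cT$ with $e\in\delta(S)$; the leaf $\{v\}$ for an endpoint $v\neq r$ of $e$ works. Stationarity of $\mu$ on $\delta(S)$ gives
\[
\P{X_e=1}=\P{X(\delta(S))=4}+\sum_{f\in\delta(S),\,f\neq e}\P{X_e=X_f=1,\ X(\delta(S))=2}=\tfrac1{14}+3\cdot\tfrac17=\tfrac12 .
\]
Hence $\E{Y_e}=\frac12+\frac3{14}=\frac57$.

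\emph{Cost.} Every original edge with $x_e=\frac12$ satisfies $\frac{5}{7}=\frac{10}{7}x_e$. An edge with $x_e=1$ was split into two parallel copies, and its expected total usage is $\frac{10}{7}=\frac{10}{7}x_e$. By linearity, $\E{c(Y)}=\frac{10}{7}c(x)$. A connected Eulerian multigraph can be shortcut to a tour of no greater cost using the triangle inequality, which gives the randomized $\frac{10}{7}$-approximation.

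The step needing the most care is the connectivity induction, specifically making sure the root and the top split are covered by property (2). The remaining steps are direct calculations.
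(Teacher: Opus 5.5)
Your proposal is correct and follows the same route as the paper: even parity from stationarity on the singleton cuts (plus $V-\set{r}\in\cT$ for the root), connectivity from property (2), and $\E{Y_e}=\frac12+2\cdot\frac{3}{28}=\frac57$ from stationarity and property (3). You simply spell out steps the paper compresses, namely the inductive connectivity argument, the choice of an endpoint $v\neq r$, and the accounting for split $x_e=1$ edges.
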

\begin{proof}
	Any subgraph sampled from $\mu$ is connected due to property (2). 
    Every set in $\cT$ has even parity with probability 1 due to property (1). This implies that every vertex in $V \smallsetminus \{r\}$ has even parity, because they are in $\cT$. Also, $r$ has even parity since $V \smallsetminus \{r\} \in \cT$. 
    Therefore, $\mu$ is a  distribution over spanning Eulerian multi-subgraphs.
	
	It remains to show that $\E{Y_e} = \frac{10}{7}x_e = \frac{5}{7}$. Fix $e \in E$ such that $e \in \delta(v)$. Since $\mu$ is stationary on $\delta(v)$, we have that $\P{X_e=1} = \frac{3}{7} + \frac{1}{14} = \frac{1}{2}$ using (i) and (ii) of \Cref{def:stationary}. Furthermore, from property (3) we have that $\P{Y_e=2} = \frac{3}{28}$. So, $\E{Y_e} = \frac{1}{2} + 2 \cdot \frac{3}{28} = \frac{5}{7}$, as desired.
\end{proof}
Let $\nu$ be the max entropy distribution over 1-trees with marginals $x$. We will now extend $\nu$ to a distribution $\mu$ which has the properties listed in \Cref{lem:dist}. In particular, $\mu$ will extend each sampled tree $T$ by adding an $\mathrm{odd}(T)$-join.\footnote{\bedit{Here, an $O$-join refers to a set of edges whose odd-degree vertices is exactly $O$, and $\mathrm{odd}(T)$ is the set of odd-degree vertices in $T$. So an $\mathrm{odd}(T)$-join is exactly what needs to be added to $T$ to make the parity of every vertex even.}} 

    

In the following lemma, we will extend $\nu$ only on the edges adjacent to the root, forming the base case. In \cref{lem:inductiveStep} we will then show how to continue this inductively.
\begin{lemma}\label{lem:basecase}
There is an extension $\mu$ of $\nu$ that is stationary on $\delta(r)$, satisfies $Y(\delta(r)) \ge 2$ with probability 1, and satisfies (3) in \Cref{lem:dist} for all $e \in \delta(r)$.
\end{lemma}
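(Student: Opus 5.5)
We work directly with the four edges $\delta(r)=\{e,f,g,h\}$, where $\{e,f\}=\delta(r)\cap\delta(A)$ and $\{g,h\}=\delta(r)\cap\delta(B)$. Under $\nu$, exactly one of $e,f$ and exactly one of $g,h$ is chosen, independently and uniformly. So $\nu$ restricted to $\delta(r)$ is uniform over the four ``cross'' pairs $\{e,g\},\{e,h\},\{f,g\},\{f,h\}$, each with probability $\frac14$. Every tree therefore has $Y(\delta(r))=2$, and the odd pair is a cross pair. The extension $\mu$ may only add copies of edges, so $Y_e\in\{0,1,2\}$ with $Y_e\ge$ the tree indicator. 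For $e\in\delta(r)$, the construction is to decide, conditional on the sampled cross pair, which edges of $\delta(r)$ to add a copy to. We then check the three requirements: stationarity (i)--(iii), $Y(\delta(r))\ge 2$ (automatic, since we only add), and $\P{Y_e=2}=\frac{3}{28}$ for each $e\in\delta(r)$.

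The target distribution on parity patterns has mass $\frac1{14}$ on all-odd, $\frac1{14}$ on all-even, and $\frac17$ on each of the six pairs. The cross pairs get total mass $\frac47$ out of the $\nu$-mass $1$. Starting from a cross pair $\{e,g\}$ with probability $\frac14$, the available operations and their effects are:
\begin{itemize}[]
\item[] Keep it, with conditional probability $\alpha$, giving the pattern $\{e,g\}$.
\item[] Double one of the two tree edges (say double $e$), which makes $e$ even and leaves $g$ odd. This must be combined with adding a single copy of one of $f,h$ to get parity pattern $\{f,g\}$ or $\{g,h\}$. Adding $f$ gives another cross pair. Adding $h$ gives the ``same-side'' pair $\{g,h\}$, and here $Y_e=2$.
\item[] Add single copies of the two non-tree edges $f,h$, which gives all-odd.
\item[] Double both tree edges, which gives all-even.
\end{itemize}
The cheapest scheme in terms of doublings is: keep with probability $\frac47$ so that the cross pairs get their $\frac17$ each; send mass to all-odd by adding $f,h$ with total $\frac1{14}$; send mass to all-even by doubling both tree edges with total $\frac1{14}$; and reach the same-side pairs $\{e,f\}$ and $\{g,h\}$, each needing $\frac17$, via the ``double one tree edge and add one other'' move.

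We then set the conditional probabilities symmetrically over the four cross pairs and compute $\P{Y_e=2}$ for a fixed edge $e$.
\begin{itemize}[]
\item[] \emph{All-even route.} This contributes mass $\frac1{14}$ in which $e$ is doubled with conditional probability $\frac12$ (when $e$ is in the tree pair), giving $\frac1{28}$.
\item[] \emph{Same-side route.} The pair $\{g,h\}$ arises from a cross pair containing $g$ or $h$ by doubling its $A$-side edge $e$ or $f$ and adding the other $B$-edge. By symmetry, half of the $\frac17$ mass for $\{g,h\}$ doubles $e$ and half doubles $f$, giving $\frac1{14}$ for $e$. The pair $\{e,f\}$ doubles only $B$-side edges.
\end{itemize}
The total is $\frac1{28}+\frac1{14}=\frac3{28}$, exactly as required. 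This coincidence confirms the construction.

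The main obstacle is simply bookkeeping: specifying the conditional moves so that each target pattern gets the right mass, and that the mass is symmetric among edges. Concretely, given tree pair $\{e,g\}$ we use the following moves with these probabilities:
\begin{itemize}[]
\item[] keep, with probability $\frac47$;
\item[] add $f,h$, with probability $\frac1{14}$;
\item[] double $e$ and $g$, with probability $\frac1{14}$;
\item[] double $e$ and add $h$, with probability $\frac17$;
\item[] double $g$ and add $f$, with probability $\frac17$.
\end{itemize}
These sum to $1$. Averaging over the four cross pairs, each with probability $\frac14$, we verify three things. First, pair $\{g,h\}$ receives $\frac14\cdot\frac17$ from each of the four cross pairs, for a total of $\frac17$, and likewise $\{e,f\}$. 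Second, all-odd and all-even each receive $\frac1{14}$. Third, each cross pair receives $\frac14\cdot\frac47=\frac17$. Since no edge receives more than two copies, $\mu$ is a valid extension of $\nu$ with the stated properties.
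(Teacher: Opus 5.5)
Your proposal is correct and is essentially the paper's proof. Your five conditional moves on a sampled cross pair $\{p,q\}$ are exactly the paper's rules 1--5:
\begin{itemize}
\item keep the pair with probability $\frac47$;
\item add the two non-tree edges with probability $\frac1{14}$;
\item double both tree edges with probability $\frac1{14}$;
\item double one tree edge and add the non-tree edge on the other side of the $A/B$ split, with probability $\frac17$ for each choice of tree edge.
\end{itemize}
Your checks of the stationarity conditions and of $\P{Y_e=2}=\frac12\left(\frac1{14}+\frac17\right)=\frac{3}{28}$ use the same case accounting as the paper.
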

\begin{proof}
As described previously, the maximum entropy algorithm splits $\delta(r)$ into two groups of edges $\{e,f\}$ and $\{g,h\}$ and picks exactly one edge from $\{e,f\}$ and exactly one from $\{g,h\}$ independently; see Figure \ref{fig:root}. To extend this distribution, after sampling edges $\{p,q\}$ where $p \in \{e,f\}$ and $q \in \{g,h\}$:
\begin{enumerate}
	\item With probability $\frac{1}{14}$, add the other two edges.
	\item With probability $\frac{1}{14}$, double both edges. 
	\item With probability $\frac{1}{7}$, add the other edge in $\{e,f\}$ and double $q$.
	\item With probability $\frac{1}{7}$, add the other edge in $\{g,h\}$ and double $p$.
    \item \bedit{With probability $\frac47$, do nothing.}
\end{enumerate}  	
Clearly, $Y(\delta(r)) \ge 2$ with probability 1. Second, for any given edge (\nedit{say in $\{e,f\}$}), it is selected by the maximum entropy tree with probability $\frac{1}{2}$. If selected, it is doubled if case 2 or 4 occurs, for a probability of $\frac{1}{2} (\frac{1}{14} + \frac{1}{7}) = \frac{3}{28}$. If not selected, it is never doubled. Thus any edge in $\delta(r)$ is doubled with probability $\frac{3}{28}$.

It remains to show that the distribution is stationary. Properties (ii) and (iii) in \Cref{def:stationary} follow immediately from 1. and 2., \bedit{respectively}. To verify property (i), we show that any pair of edges ends up with odd parity while the other two have even parity with probability exactly $1/7$.
\begin{itemize}
    \item For a ``crossing'' pair (e.g., $e$ and $g$), they are selected by the max entropy algorithm with probability $1/4$. The outcome is $\{X_e=1, X_g=1, X(\delta(r))=2\}$ if no modification occurs, which happens with probability $1 - (\frac{1}{14} + \frac{1}{14} + \frac{1}{7} + \frac{1}{7}) = \frac{4}{7}$. The total probability is $\frac{1}{4} \cdot \frac{4}{7} = \frac{1}{7}$.
    \item For a ``parallel'' pair (e.g., $e$ and $f$), the outcome $\{X_e=1, X_f=1, X(\delta(r))=2\}$ occurs if we sample $\{e,q\}$ (for $q \in \{g,h\}$) and apply rule 3, or if we sample $\{f,q\}$ and apply rule 3. Each occurs with probability $\frac{1}{2} \cdot \frac{1}{7} = \frac{1}{14}$. Summing these disjoint events, the total probability is $\frac{1}{14}+\frac{1}{14} = \frac{1}{7}$.
\end{itemize}
Thus property (i) of \Cref{def:stationary} holds.
\end{proof}
The following lemma provides the inductive step. Given that we have already processed a cut $S$ to ensure $\delta(S)$ is stationary, we show how to process its children $A$ and $B$. This ensures that by the time we have processed all cuts in $\mathcal{T}$, the properties of \Cref{lem:dist} hold for all cuts and edges. As observed previously, for any $S \in \cT$ which is not a singleton vertex, $S$ has two children $A$ and $B$, there are two edges between $A$ and $B$, and $|\delta(A)\cap\delta(S)|=|\delta(B)\cap\delta(S)|=|\delta(A) \cap \delta(B)| = 2$. See \Cref{fig:notroot} for an illustration. 
\begin{lemma}\label{lem:inductiveStep}
    Let $S \in \cT$ with children $A$ and $B$. Suppose $\mu$ is stationary over $\delta(S)$. If $|S| \geq 2$, we can extend $\mu$ so that \bedit{property (1) of \Cref{lem:dist} holds on $A$ and $B$,} and  properties (2) and (3) of \Cref{lem:dist} hold for the cuts $A$ and $B$ and for the two edges between them.
    \end{lemma}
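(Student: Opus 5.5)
The plan is to extend $\mu$ by modifying only the two cycle partners $a,b$ (with $\{a,b\}=\delta(A)\cap\delta(B)$). The edges of $\delta(S)$, and everything decided at ancestors of $S$, stay fixed, so stationarity and properties (2), (3) already established elsewhere are preserved. Write $\delta(S)=\{e,f,g,h\}$ with $\{e,f\}=\delta(S)\cap\delta(A)$ and $\{g,h\}=\delta(S)\cap\delta(B)$, as in \Cref{fig:notroot}. The state of $\delta(S)$ depends only on choices and extensions made at ancestors of $S$. Hence the max entropy choice of one of $a,b$ (uniform) is independent of this state, and any extra coins we flip can be taken independent as well.

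Let $s_A=X_e+X_f$ and $s_B=X_g+X_h$. For $\delta(A)$ and $\delta(B)$ to have even parity, $X_a+X_b\equiv s_A\equiv s_B \pmod 2$. By stationarity of $\delta(S)$ (\Cref{def:stationary}), $(s_A,s_B)$ is distributed as follows:
\begin{itemize}
\item $(0,0)$ with probability $\frac1{14}$ and $(2,2)$ with probability $\frac1{14}$;
\item $(2,0)$ with probability $\frac17$ and $(0,2)$ with probability $\frac17$;
\item $(1,1)$ with probability $\frac47$, spread as $\frac17$ over each crossing pair $\{e,g\},\{e,h\},\{f,g\},\{f,h\}$.
\end{itemize}
I propose the following rule.
\begin{itemize}
\item In state $(1,1)$, do nothing. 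Exactly one of $a,b$ (the tree edge) is then odd.
\item In state $(0,0)$, add the non-chosen partner, so both $a,b$ are odd.
\item In state $(2,2)$, double the chosen edge, so both are even.
\item In states $(2,0)$ and $(0,2)$, flip an independent fair coin. On heads add the other partner (both odd); on tails double the chosen edge (both even).
\end{itemize}

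Next I verify stationarity of $\delta(A)=\{e,f,a,b\}$; $\delta(B)$ is symmetric.
\begin{itemize}
\item All four edges odd requires $s_A=2$ and $a,b$ odd. This is the $(2,0)$-heads event, of probability $\frac1{14}$.
\item All four even requires $s_A=0$ and $a,b$ even. This is the $(0,2)$-tails event, of probability $\frac1{14}$.
\item Exactly $\{e,f\}$ odd is $(2,2)$ together with $(2,0)$-tails, of probability $\frac1{14}+\frac1{14}=\frac17$.
\item Exactly $\{a,b\}$ odd is $(0,0)$ together with $(0,2)$-heads, of probability $\frac17$.
\item A mixed pair such as $\{e,a\}$ requires a crossing state with $e$ odd (total probability $\frac27$) and $a$ chosen (probability $\frac12$, independent), giving $\frac17$.
\end{itemize}
So property (1) of \Cref{lem:dist} holds on $A$ and $B$.

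For property (3), doubling happens exactly in the states where $a,b$ are both even: $(2,2)$, $(2,0)$-tails and $(0,2)$-tails, of total mass $\frac1{14}+\frac1{14}+\frac1{14}=\frac3{14}$. The doubled edge is the uniformly chosen one, independent of the state, so each of $a,b$ is doubled with probability $\frac3{28}$. For property (2), every rule keeps or adds to the chosen tree edge, so $Y_a+Y_b\ge1$ always. Moreover $X(\delta(A))$ is even; if it is positive it is at least $2$, and if it is $0$ then the present edge among $a,b$ has $Y=2$. Hence $Y(\delta(A))\ge 2$, and likewise for $B$.

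The main obstacle is finding the case assignment in the even-parity states $(0,0),(2,2),(2,0),(0,2)$. It must simultaneously meet the two $\frac1{14}$ constraints and the $\frac17$ constraints for both children, and also produce exactly $\frac3{14}$ total doubling mass. That is why doubling is used only when $a,b$ must both be even, and never in the crossing states. The remaining checks are the routine ones above.
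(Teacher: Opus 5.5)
Your proposal is correct and follows the paper's proof essentially verbatim. Your four rules are the paper's cases 1--4 (do nothing on crossing pairs, add the other partner when $X(\delta(S))=0$, double the chosen edge when $X(\delta(S))=4$, and a fair coin between these two options when $\{e,f\}$ or $\{g,h\}$ is the odd pair), and your verification of stationarity and of the $\frac{3}{28}$ doubling probability matches the paper's. You are more explicit than the paper in two places: you justify why the max entropy choice of $a$ or $b$ is independent of the state of $\delta(S)$, and you check $Y(\delta(A))\ge 2$ through the parity argument.
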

\begin{proof}
Let $A,B$ be the children of $S$. Let $\{e,f,g,h\}$ be the edges in $\delta(S)$ where $e,f \in \delta(A)$ and $g,h \in \delta(B)$. Let $a,b$ be the cycle partners between $A$ and $B$. See Figure \ref{fig:notroot}. Now we perform the following extension of $\mu$ to include $a,b$ so that $\delta(A)$ and $\delta(B)$ are stationary, we always take at least one of $a$ or $b$, and the probability we double each of $a,b$ is $\frac{3}{28}$. (Note that unlike in the root case, it is not necessarily the case that max entropy takes exactly one edge among each of $e,f$ and $g,h$.) 
\begin{enumerate}
	\item If $X(\delta(S))=2$ and $X_e+X_f=1$ and $X_g+X_h=1$, do nothing. Max entropy will take $a$ or $b$ uniformly at random.
	\item When $\{X_e=X_f=1, \, X(\delta(S))=2\}$ or $\{X_g=X_h=1, \, X(\delta(S))=2\}$, with probability $\frac{1}{2}$ double the edge among $a,b$ picked by max entropy. \bedit{Otherwise, add the other edge.}
	\item When $X(\delta(S)) = 4$, double the edge among $a,b$ that max entropy picks.
	\item When $X(\delta(S)) = 0$, add the edge among $a,b$ that max entropy does not pick.
\end{enumerate}
Max entropy takes $a$ or $b$, and we do not delete edges, so property (2) of \Cref{lem:dist} is satisfied. The probability we double an edge is $\frac{2}{7}\cdot \frac{1}{4}$ in case 2 and $\frac{1}{14}\cdot \frac{1}{2}$ in case 3. Summing these we get $\frac{3}{28}$ as desired. So it remains to show that this distribution is stationary for $\delta(A)$ and $\delta(B)$.

For $\delta(A)$, $X(\delta(A)) = 4$ occurs only when we take the pair $e,f$ (w.p. $\frac{1}{7}$) and then we take both $a$ and $b$ (w.p. $\frac{1}{2}$). This gives a probability of $\frac{1}{7} \cdot \frac{1}{2} = \frac{1}{14}$ as desired. $X(\delta(A)) = 0$ occurs only when we take the pair $g,h$ (w.p. $\frac{1}{7}$) and we take $a$ twice or we take $b$ twice (w.p. $\frac{1}{2}$), so this happens with probability $\frac{1}{7} \cdot \frac{1}{2} = \frac{1}{14}$. 

Next, the event $\{X_e=1, X_f = 1, X(\delta(A))=2\}$ happens when either $X(\delta(S))=4$ (w.p. $\frac{1}{14}$), or $X_e=X_f=1$ (w.p. $\frac{1}{7}$) and we double $a$ or we double $b$ (w.p. $\frac{1}{2}$). So this occurs with probability $\frac{1}{7} \cdot \frac{1}{2} + \frac{1}{14} = \frac{1}{7}$. On the other hand, the event $\{X_a=1, X_b=1, X(\delta(A))=2\}$ happens when either $X(\delta(S))=0$ (w.p. $\frac{1}{14}$), or $X_g=X_h=1$  (w.p. $\frac17$) and we take both $a$ and $b$ (w.p. $\frac{1}{2}$).  So, this happens with probability $\frac{1}{14} + \frac{1}{7}\cdot \frac{1}{2} = \frac{1}{7}$. Finally, we consider a pair that consists of one edge of $e,f$ (say $e$) and one edge of $a,b$ (say $a$). In this case, the event $\{X_e=X_a=1,X(\delta(A))=2\}$ happens when $X_e=1$ and $X_f=0$ in $\delta(S)$, which occurs with probability $\frac{2}{7}$, and then we select $a$, which occurs with probability $\frac{1}{2}$. So the total probability of this event is also $\frac{1}{7}$, as desired.

This completes the proof, as an identical argument works for $\delta(B)$.
\end{proof}

\nedit{
The main theorem is now a corollary of the preceding three lemmas. In summary, \cref{lem:basecase} and \cref{lem:inductiveStep} show that given $\nu$, the max entropy distribution, one can add edges to elements in the support of $\nu$ (i.e., extend $\nu$) to produce a distribution $\mu$ with the properties described in \cref{lem:dist}. The expected cost of an element sampled from $\mu$ is at most $\frac{10}{7}c(x)$. Since the max entropy algorithm adds the cheapest collection of edges making the tree Eulerian, the expected cost of the max entropy algorithm is also at most $\frac{10}{7}c(x)$. \\
}

\noindent \textbf{Acknowledgments} \\

\noindent The second author was supported by the NSF CAREER grant CCF-2442250.

\printbibliography

@string{focs11 = "Proceedings of the 52nd Annual {IEEE} Symposium on the Foundations of Computer Science"}

@string{stoc20 = "Proceedings of the 52nd Annual {ACM} Symposium on the
	Theory of Computing"}

@string{a = "Algorithmica"}

@string{c = "Combinatorica"}

@string{mor = "Mathematics of Operations Research"}

@string{mpb = "Mathematical Programming B"}

@string{or = "Operations Research"}

@article{AsadpourGMOS17,
author = "Arash Asadpour and Michel X. Goemans and Aleksander Mądry and Shayan {Oveis Gharan} and Amin Saberi",
title	= "An ${O}(\log n/\log \log n)$-Approximation Algorithm for the Asymmetric Traveling Salesman Problem",
journal = or,
volume = 65,
pages	= "1043--1061",
year	= 2017}

@InProceedings{GKL24,
  author =	{Gurvits, Leonid and Klein, Nathan and Leake, Jonathan},
  title =	{{From Trees to Polynomials and Back Again: New Capacity Bounds with Applications to TSP}},
  booktitle =	{51st International Colloquium on Automata, Languages, and Programming (ICALP 2024)},
  pages =	{79:1--79:20},
  series =	{Leibniz International Proceedings in Informatics (LIPIcs)},
  ISBN =	{978-3-95977-322-5},
  ISSN =	{1868-8969},
  year =	{2024},
  volume =	{297},
  editor =	{Bringmann, Karl and Grohe, Martin and Puppis, Gabriele and Svensson, Ola},
  publisher =	{Schloss Dagstuhl -- Leibniz-Zentrum f{\"u}r Informatik},
  address =	{Dagstuhl, Germany},
  URL =		{https://drops.dagstuhl.de/entities/document/10.4230/LIPIcs.ICALP.2024.79},
  URN =		{urn:nbn:de:0030-drops-202229},
  doi =		{10.4230/LIPIcs.ICALP.2024.79}
}

@techreport{Christofides76,
author  = "Nicos Christofides",
title   = "Worst Case Analysis of a New Heuristic for the Traveling Salesman
       Problem",
institution = "Graduate School of Industrial Administration,
           Carnegie-Mellon University",
type    = "Report",
number  = 388,
address = "Pittsburgh, PA",
year    = 1976}

@InProceedings{HaddadanN19,
  author =	{Arash Haddadan and Alantha Newman},
  title =	{{Towards Improving Christofides Algorithm for Half-Integer {TSP}}},
  booktitle =	{27th Annual European Symposium on Algorithms (ESA 2019)},
  pages =	{56:1--56:12},
  series =	{Leibniz International Proceedings in Informatics (LIPIcs)},
  ISBN =	{978-3-95977-124-5},
  ISSN =	{1868-8969},
  year =	{2019},
  volume =	{144},
  editor =	{Michael A. Bender and Ola Svensson and Grzegorz Herman},
  publisher =	{Schloss Dagstuhl--Leibniz-Zentrum fuer Informatik},
  address =	{Dagstuhl, Germany},
  note="See also \url{https://arxiv.org/pdf/1907.02120v3.pdf}"}

@article{HeldK71,
author  = "Michael Held and Richard M. Karp",
title   = "The Traveling-Salesman Problem and Minimum Spanning Trees",
journal = or,
volume  = 18,
pages   = "1138--1162",
year    = 1971}

@inproceedings{JinKW23,
author="Billy Jin and Nathan Klein and  David P. Williamson",
editor="Del Pia, Alberto and Kaibel, Volker",
title="A 4/3-Approximation Algorithm for Half-Integral Cycle Cut Instances of the {TSP}",
booktitle="Integer Programming and Combinatorial Optimization",
series="Lecture Notes in Computer Science",
number=13904,
publisher="Springer International Publishing",
address="Berlin, Germany",
pages="217--230",
year=2023
}

@article{JinKW25,
author="Billy Jin and Nathan Klein and David P. Williamson",
title="A 4/3-Approximation Algorithm for Half-Integral Cycle Cut Instances of the {TSP}",
journal = mpb,
volume = 210,
pages = "511--538",
year = 2025
}

@article{JinKW25b,
author="Billy Jin and Nathan Klein and David P. Williamson",
title="A Lower Bound for the Maximum Entropy Algorithm for {TSP}",
journal=mpb,
year = 2025,
note="To appear.  See also \url{https://arxiv.org/abs/2311.01950}."}

@inproceedings{OveisGharanSS11,
author  = "Shayan {Oveis Gharan} and Amin Saberi and Mohit Singh",
title   = "A Randomized Rounding Approach to the Traveling Salesman Problem",
booktitle = focs11,
pages   = "550--559",
year    = 2011}

@article{SchalekampWvZ14,
  shorthand = {SWvZ14},
  author    = {Frans Schalekamp and
               David P. Williamson and
               Anke {van Zuylen}},
  title     = {2-Matchings, the Traveling Salesman Problem, and the Subtour {LP:}
               {A} Proof of the {B}oyd-{C}arr Conjecture},
  journal   = mor,
  volume    = {39},
  number    = {2},
  pages     = {403--417},
  year      = {2014}}

@article{Serdyukov78,
author	= "A. Serdyukov",
title	= "On Some Extremal Walks in Graphs",
journal	= "Upravlyaemye Sistemy",
volume	= 17,
pages	= "76--79",
year	= 1978}

@article{Wolsey80,
author  = "L. A. Wolsey",
title   = "Heuristic Analysis, Linear Programming and Branch and Bound",
journal = "Mathematical Programming Study",
volume  = 13,
pages   = "121--134",
year    = 1980}

@inproceedings{KKO20,
  author    = {Anna R. Karlin and
               Nathan Klein and
               Shayan {Oveis Gharan}},
  title     = {An Improved Approximation Algorithm for {TSP} in the Half Integral
            Case},
  booktitle = stoc20,
  pages     = {28--39},
  year      = {2020},
}

@INPROCEEDINGS{KKO21b,
author = {A. Karlin and N. Klein and S. {Oveis Gharan}},
booktitle = {2022 IEEE 63rd Annual Symposium on Foundations of Computer Science (FOCS)},
title = {A (Slightly) Improved Bound on the Integrality Gap of the Subtour {LP} for {TSP}},
year = {2022},
volume = {},
issn = {},
pages = {832-843},
url = {https://doi.ieeecomputersociety.org/10.1109/FOCS54457.2022.00084},
publisher = {IEEE Computer Society},
address = {Los Alamitos, CA, USA},
month = {11}
}

@article{KKO21,
author = {Karlin, Anna R. and Klein, Nathan and {Oveis Gharan}, Shayan},
title = {A (Slightly) Improved Approximation Algorithm for Metric TSP},
journal = {Operations Research},
volume = {72},
number = {6},
pages = {2543-2594},
year = {2024},
doi = {10.1287/opre.2022.2338},

URL = { 
    
        https://doi.org/10.1287/opre.2022.2338
    
    

},
eprint = { 
    
        https://doi.org/10.1287/opre.2022.2338
    
    

}
}

@Article{Gupta2024,
author={Gupta, Anupam
and Lee, Euiwoong
and Li, Jason
and Mucha, Marcin
and Newman, Heather
and Sarkar, Sherry},
title={Matroid-based TSP rounding for half-integral solutions},
journal={Mathematical Programming},
year={2024},
month={Jul},
day={01},
volume={206},
number={1},
pages={541-576},
issn={1436-4646},
doi={10.1007/s10107-024-02065-4},
url={https://doi.org/10.1007/s10107-024-02065-4}
}

@article{DFJ54,
 URL = {http://www.jstor.org/stable/166695},
 author = {G. Dantzig and R. Fulkerson and S. Johnson},
 journal = {Journal of the Operations Research Society of America},
 number = {4},
 pages = {393--410},
 publisher = {INFORMS},
 title = {Solution of a Large-Scale Traveling-Salesman Problem},
 urldate = {2023-02-28},
 volume = {2},
 year = {1954}
}

@InProceedings{KT25,
  author =	{Klein, Nathan and Taziki, Mehrshad},
  title =	{{Dual Charging for Half-Integral TSP}},
  booktitle =	{Approximation, Randomization, and Combinatorial Optimization. Algorithms and Techniques (APPROX/RANDOM 2025)},
  pages =	{21:1--21:22},
  series =	{Leibniz International Proceedings in Informatics (LIPIcs)},
  ISBN =	{978-3-95977-397-3},
  ISSN =	{1868-8969},
  year =	{2025},
  volume =	{353},
  editor =	{Ene, Alina and Chattopadhyay, Eshan},
  publisher =	{Schloss Dagstuhl -- Leibniz-Zentrum f{\"u}r Informatik},
  address =	{Dagstuhl, Germany},
  URL =		{https://drops.dagstuhl.de/entities/document/10.4230/LIPIcs.APPROX/RANDOM.2025.21},
  URN =		{urn:nbn:de:0030-drops-243879},
  doi =		{10.4230/LIPIcs.APPROX/RANDOM.2025.21}
}
\end{document}